\title{Further Generalisations of \\ Twisted Gabidulin Codes}
\author{Sven Puchinger\inst{1} \and Johan Rosenkilde n\'e Nielsen\inst{2} \and John Sheekey\inst{3}}
\institute{Institute of Communications Engineering, Ulm University, Ulm, Germany\\
\email{sven.puchinger@uni-ulm.de}
\and
Department of Applied Mathematics \& Computer Science, Technical University of Denmark, Lyngby, Denmark\\
\email{jsrn@jsrn.dk}
\and
School of Mathematics and Statistics, University College Dublin, Dublin, Ireland. 
\email{john.sheekey@ucd.ie}
}
\def\mkfancyprefix#1#2{%
\expandafter\def\csname fancyref#1labelprefix\endcsname{#1}%
% plain lowercase
\begingroup\def\x{\endgroup\frefformat{plain}}%
    \expandafter\x\csname fancyref#1labelprefix\endcsname
    {\MakeLowercase{#2}\fancyrefdefaultspacing##1}%
% plain uppercase
\begingroup\def\x{\endgroup\Frefformat{plain}}%
    \expandafter\x\csname fancyref#1labelprefix\endcsname
    {#2\fancyrefdefaultspacing##1}%
% vario lowercase
\begingroup\def\x{\endgroup\frefformat{vario}}%
    \expandafter\x\csname fancyref#1labelprefix\endcsname
    {\MakeLowercase{#2}\fancyrefdefaultspacing##1##3}%
% vario uppercase
\begingroup\def\x{\endgroup\Frefformat{vario}}%
    \expandafter\x\csname fancyref#1labelprefix\endcsname
    {#2\fancyrefdefaultspacing##1##3}%
}
\fancyrefchangeprefix{\fancyrefeqlabelprefix}{eqn}
\newcommand{\cref}[1]{\Fref{#1}}
\def\ve#1{{\mathchoice{\mbox{\boldmath$\displaystyle #1$}}%
              {\mbox{\boldmath$\textstyle #1$}}%
              {\mbox{\boldmath$\scriptstyle #1$}}%
              {\mbox{\boldmath$\scriptscriptstyle #1$}}}}
\definecolor{mygreen}{rgb}{0,0.7,0}
\newcommand{\NN}{\mathbb{N}}
\newcommand{\Fq}{\mathbb{F}_q}
\newcommand{\Fqm}{\mathbb{F}_{q^m}}
\newcommand{\Fqs}{\mathbb{F}_{q^s}}
\newcommand{\Code}{\mathcal{C}}
\newcommand{\A}{\ve A}
\newcommand{\B}{\ve B}
\newcommand{\C}{\mathcal{C}}
\newcommand{\x}{\ve x}
\DeclareMathOperator{\evOp}{ev}
\newcommand{\ev}[2]{\evOp_{#2}(#1)}
\newcommand{\numTwists}{\ell}
\newcommand{\tVec}{{\ve t}}
\newcommand{\etaVec}{{\ve \eta}}
\newcommand{\mybox}{\Box}
\newcommand{\etai}[1]{\eta_{t_{#1}}^{-1}}
\newcommand{\alphaVec}{{\ve \alpha}}
\newcommand{\betaVec}{{\ve \beta}}
\newcommand{\Polys}{\Fqm[x;\sigma]}
\DeclareMathOperator{\rk}{rk}
\newcommand{\Gal}{\mathrm{Gal}}
\newcommand{\dR}{\mathrm{d_R}}
\newcommand{\ZZ}{\mathbb{Z}}
\newcommand{\lambdaVec}{\ve \lambda}
\newcommand{\multitwisted}{$(k, \tVec,\lambdaVec,\etaVec)$-twisted }
\newcommand{\multievpolys}{\mathcal{V}_{k,\tVec,\lambdaVec,\etaVec}}
\newcommand{\Ctwisted}{\mathcal{C}_k(\alphaVec,\tVec,\lambdaVec,\etaVec)}
\newcommand{\Fqsi}[1]{\mathbb{F}_{q^{s_{#1}}}}
\renewcommand{\S}{\mathcal{S}}
\newcommand{\Ann}{\mathcal A}
\newcommand{\mymat}[1]{}
\newcommand{\Fqn}{\mathbb{F}_{q^n}}
\newcommand{\cV}{\mathcal V}
\def\Fq{{\mathbb{F}}_q}
\def\dim{\mathrm{dim}}
\DeclareMathOperator{\modr}{~mod_r}
\DeclareMathOperator{\gcrd}{gcrd}
\newcommand{\malpha}{\Ann_{\alphaVec}}
\newcommand{\eva}[1]{\ev{#1}{\alphaVec}}
\newcommand{\evb}[1]{\ev{#1}{\betaVec}}
\begin{document}

\maketitle

\begin{abstract}
We present a new family of maximum rank distance (MRD) codes.
The new class contains codes that are neither equivalent to a generalised Gabidulin nor to a twisted Gabidulin code, the only two known general constructions of linear MRD codes.
\end{abstract}

\section{Introduction}

Rank-metric codes are sets of matrices, where the distance of two elements is measured with respect to the rank metric, i.e., the rank of their difference.
These codes have found many applications, such as random linear network coding \cite{silva2008rank}, MIMO communication \cite{gabidulin2000space}, cryptography \cite{gabidulin1991ideals}, and distributed storage \cite{silberstein2012error}.
A rank-metric code is called a maximum rank distance (MRD) code if its minimum rank distance is maximal for the given parameters.

The first known class of MRD codes are Gabidulin codes, which were independently introduced in \cite{Delsarte_1978,Gabidulin_TheoryOfCodes_1985,Roth_RankCodes_1991}.
They are evaluation codes of skew polynomials \cite{ore1933theory} that are defined using the Frobenius automorphism $\cdot^q$ of a finite field extension $\Fqm/\Fq$.
Gabidulin codes were generalised in \cite{roth1996tensor,kshevetskiy2005new,augot2013rank} using other automorphisms.

The first families of MRD codes that are not equivalent to a generalised Gabidulin code were independently introduced in \cite{sheekey2015new} (twisted Gabidulin codes) and \cite{otal2016explicit}, where the latter is a special case of the first.
Similar to Gabidulin codes, twisted Gabidulin codes were generalised using different automorphisms in \cite[Remark~9]{sheekey2015new} and \cite{lunardon2015generalized}.
Other constructions, leading to non-linear codes or codes with restricted parameters, can be found in \cite{cossidente2016non,horlemann2015new,otal2017additive}.

Recently, the idea of ``twisting'' was transferred to Reed--Solomon codes, the Hamming metric analogue of Gabidulin codes \cite{beelen2017twisted}.
This transfer resulted in new generalisations of the ``twisting'' idea and tools for the analysis of the new codes.

In this paper, we introduce a new class of rank-metric codes using ideas from \cite{sheekey2015new} and \cite{beelen2017twisted} (cf.~Section~\ref{sec:new_construction}).
A sufficient condition for the new codes to be MRD is derived in Section~\ref{sec:mrd_property}.
In Section~\ref{sec:nonequivalence}, we show that the new family contains MRD codes that are not equivalent to generalised Gabidulin or to the twisted Gabidulin codes in \cite{sheekey2015new}.
Finally, we briefly discuss a possible application to cryptography in Section~\ref{sec:application} and conclude the paper in Section~\ref{sec:conclusion}.

\section{Preliminaries}
\label{sec:preliminaries}

The following are well-known facts on skew polynomials over finite fields, see e.g. \cite{ore1933theory,mcdonald1974finite}.
Let $\Fqm/\Fq$ be a finite extension of a finite field and $\sigma \in \Gal(\Fqm/\Fq)$ be a generator of the Galois group of the extension, i.e., $\sigma = (\cdot)^{q^i}$, where $\gcd(i,m)=1$.
The skew polynomial ring $\Polys$ is the set of formal polynomial expressions $a = \sum_{i} a_i x^i$, where $a_i \in \Fqm$, with ordinary component-wise addition $+$ and the following multiplication rule:
\begin{align*}
\left(\textstyle\sum_{i} a_i x^i \right) \cdot \left(\textstyle\sum_{i} b_i x^i\right) = \textstyle\sum_i \left( \textstyle\sum_{j=0}^{i} a_j \sigma^{j}\left( b_{i-j} \right) \right) x^i.
\end{align*}
We define the degree of $a = \sum_{i} a_i x^i \in \Polys$ by $\deg(a) := \max\{i : a_i \neq 0\}$ (and $\deg 0 := - \infty$) and the (operator) evaluation map \cite{boucher2014linear} by
\begin{align*}
a(\cdot) : \Fqm \to \Fqm, \quad \alpha \mapsto \textstyle\sum_{i} a_i \sigma^i\left(\alpha\right).
\end{align*}
Since $\sigma$ is an automorphism that fixes $\Fq$, the evaluation map of a skew polynomial is an $\Fq$-linear map and its root space $\ker(a)$ is an $\Fq$-linear subspace of $\Fqm$ (seen as an $m$-dimensional vector space over $\Fq$).
Since $\sigma$ is a generator of $\Gal(\Fqm/\Fq)$, we know that $\dim \ker(a) \leq \deg a$.
For any subspace $\S \subseteq \Fqm$, there is a unique monic polynomial $\Ann_\S$, the \emph{annihilator polynomial of $\S$}, of minimal degree whose kernel contains $\S$. Its degree is $\deg \Ann_\S = \dim_{\Fq}(\S)$.

The ring of skew polynomials is left and right Euclidean \cite{ore1933theory}, so the greatest common right divisor $\gcrd$ and the right modulo operator $\modr$ are well-defined.

\begin{definition}\label{def:eval_map}
Let $\mathcal{V}$ be a $k$-dimensional $\Fqm$-linear subspace of $\subset \Polys$.
Let $\alpha_1,\dots,\alpha_n \in \Fqm$ be linearly independent over $\Fq$ and write $\alphaVec = [\alpha_1,\dots,\alpha_n]$.
Then we define the \emph{evaluation map} of $\alphaVec$ on $\mathcal{V}$ by
\begin{align*}
\evOp_{\alphaVec} : \mathcal{V} \to \Fqm^n, \quad
f \mapsto [f(\alpha_1),\dots,f(\alpha_n)].
\end{align*}
We call $\alpha_1,\dots,\alpha_n$ the \emph{evaluation points} of $\evOp_{\alphaVec}$.
\end{definition}

Note that $\evOp_{\vec\alpha}$ is an $\Fqm$-linear map.
Furthermore, $\ker \evOp_{\vec \alpha} = \{ f \Ann_{\vec\alpha} \mid f \in \Polys \} \cap \mathcal V$.
If all elements of $\mathcal V$ have degree less than $n$, $\evOp_{\vec \alpha}$ is invertible.

\begin{definition}
  Given $\alpha_1,\ldots,\alpha_n \in \Fqm$ which are linearly independent over $\Fq$ as well as $k$ with $1 \leq k \leq n$ and some $\sigma \in \Gal(\Fqm/\Fq)$, then the \emph{$[n,k]$ Gabidulin code $\C \subset \Fqm^n$ with defining automorphism $\sigma$ and evaluation points $\vec \alpha = [\alpha_1,\ldots,\alpha_n]$} is the set:
  \[
    \C = \ev{\langle 1, x, \ldots, x^{k-1} \rangle_{\Fqm}\!}{\vec\alpha} \subset \Fqm^n \ ,
  \]
  where $x$ is the indeterminate of $\Polys$.
\end{definition}

From the properties of $\evOp_{\vec\alpha}$, a Gabidulin code is a linear code of dimension $k$.
When $\sigma \neq (\cdot)^q$ then a Gabidulin code with defining automorphism $\sigma$ is sometimes called a ``generalised Gabidulin code''.

For any $a = (a_1,\ldots,a_n) \in \Fqm^n$ we define $\rk(a) := \dim_{\Fq}\langle a_1,\ldots a_n \rangle$.
In other words, $\rk(a)$ is the rank of the $\Fq^{m \times n}$-matrix obtained by expanding each $a_i$ into an $\Fq^m$-vector over any basis of $\Fqm/\Fq$.
We then endow $\Fqm^n$ with the \emph{rank metric over $\Fq$}, for $a, b \in \Fqm^n$ we define $\dR(a,b) := \rk(a-b)$.
A ``rank metric code'' is simply a linear code over $\Fqm$ whose properties are considered wrt.~the rank metric.
A Gabidulin code $\C$ attains the Singleton bound for the rank metric, that is, $\dR(\C) = n - k + 1$.
Codes attaining this bound are called Maximal Rank Distance codes, or MRD.

\section{A New Construction}
\label{sec:new_construction}

\subsection{Idea}
\label{ssec:idea}

Gabidulin codes are MRD since any non-zero skew polynomials $f \in \Polys$ of degree at most $k-1$ has a space of roots of dimension at most $k-1$.
Hence by evaluating $f$ at $n$ linearly independent elements of $\Fqm$, the result spans a subspace of dimension at least $n-k+1$.

The paper \cite{sheekey2015new} considered skew polynomials of the form $f = f' + \eta f_0 x^k$ where $\deg f' \leq k-1$, $f_0$ is the constant coefficient of $f'$, and $\eta \in \Fqm$ is some fixed constant.
He showed that by choosing $\eta$ carefully, any such $f$ will \emph{also} have a space of roots of dimension at most $k-1$, even though $f$ has degree $k$.
Applying the evaluation map to this space of polynomials immediately gives an MRD code which turns out to be inequivalent to a Gabidulin code; these codes were dubbed ``twisted Gabidulin codes''.
\cite{sheekey2015new} mainly discussed $\sigma = (\cdot)^q$, but it is straightforward to use any generator $\sigma \in \Gal(\Fqm/\Fq)$, see \cite[Remark~9]{sheekey2015new} or \cite{lunardon2015generalized}.

In this paper we consider the obvious generalisation of ``twisting'' in other ways than with (a multiple of) $f_0$ at the monomial $x^k$.
One e.g.~immediately thinks of polynomials of the form $f = f' + \eta f_h x^{k-1 + t}$, where $f_h$ is the coefficient to $x^h$ in $f'$ for some $h \in \{0,\ldots,k-1\}$ and $t \in \{1,\ldots,n-k\}$.
``Multiple twists'' is the next natural idea, e.g.~$f = f' + \eta_1 f_{h_1} x^{k-1 + t_1} +  \eta_2 f_{h_2} x^{k-1 + t_2}$.
The difficulty lies in how to argue that the resulting polynomials never have a space of roots of dimension greater than $k-1$, within the space spanned by the evaluation points.
The tools we develop for this are very different from those of \cite{sheekey2015new,lunardon2015generalized} but are completely analogous to those of the recent ``Twisted Reed--Solomon codes'' \cite{beelen2017twisted}.
It turns out that those tools cope effortlessly with an even more general notion of twisting, where the $f_{h_1}, f_{h_2}, \ldots$ are replaced with arbitrary linear combinations of all the $f_0,\ldots,f_{k-1}$, i.e.:
\begin{align*}
  \textstyle
f = \sum_{j=0}^{k-1} f_j x^j + \sum_{i=1}^{\ell} \eta_i \left( \sum_{j=0}^{k-1} \lambda_{i,j} f_j \right) x^{k-1+t_i}.
\end{align*}
for fixed $\eta_1,\ldots,\eta_\ell$ and $\lambda_{i,j}$ for $i=1,\ldots,\ell$ and $j=0,\ldots,k-1$.

This generalization comes at the cost of shorter code lengths $n$ since we will need to restrict the choice of evaluation points.
On the other hand, our codes are very constructive: the restrictions on choosing the evaluation points and the parameters $\eta_i$ and $\lambda_{i,j}$ are simply that they belong to or avoid certain sub-fields of $\Fqm$, and are therefore easily satisfied.

We also call our new codes ``Twisted Gabidulin codes'', and consider the codes of \cite{sheekey2015new,lunardon2015generalized} special twists.

\subsection{Formal definition}
\label{ssec:definition}

Let $n,k,\ell \in \NN$ be such that $k < n \leq m$ and $\ell \leq n-k$. Furthermore, let $\eta_1,\ldots,\eta_\ell \in \Fqm \setminus \{0\}$ as well as $0 < t_1 < t_2 < \ldots < t_\ell < n-k$.
For $i=1,\dots,\ell$, let $\lambda_i: \Fqm^k \rightarrow \Fqm$ be $\Fqm$-linear maps.
Write $\etaVec = [\eta_1,\dots,\eta_\ell]$; $\tVec = [t_1,\dots,t_\ell]$; and $\lambdaVec = [\lambda_1,\dots,\lambda_\ell]$.
We define the set of \emph{\multitwisted skew polynomials} by
\begin{align*}
\multievpolys = \left\{ f = \sum_{i=0}^{k-1} f_i x^i + \sum_{i=1}^{\ell} \eta_i \lambda_i(f_0,\dots,f_{k-1}) x^{k-1+t_i} : f_i \in \Fqm \right\}.
\end{align*}

\begin{definition}
Let $n,k,\ell,\tVec,\lambdaVec,\etaVec$ be as above and $\alphaVec := [\alpha_1,\dots,\alpha_n] \in \Fqm$ be such that $\alpha_1,\dots,\alpha_n$ are linearly independent over $\Fq$.
The corresponding ($\ell$-)\emph{twisted Gabidulin code} is defined by
\begin{align*}
\Ctwisted := \ev{\multievpolys}{\alphaVec} \subseteq \Fqm^n \ .
\end{align*}
\end{definition}

Since $\multievpolys$ is an $\Fqm$-linear subspace of $\Polys_{<n}$ and $\ev{\cdot}{\alphaVec}$ is an injective linear map on $\Polys_{<n}$, a twisted Gabidulin code is an $\Fqm$-linear code of length $n$ and dimension $k$.

Note that given $\lambda_i$, there are unique constants $\lambda_{i,0},\ldots,\lambda_{i,k-1} \in \Fqm$, not all zero, and such that $\lambda_i(f_0,\ldots,f_{k-1}) = \lambda_{i,0} f_0 + \ldots + \lambda_{i,k-1} f_{k-1}$. We will call the $\lambda_{i,j}$ the \emph{coefficients} of $\lambda_i$.
In the sequel, whenever we introduce $\vec \lambda$, then we take $\lambda_{i,j}$ to be the coefficients of the entries of $\vec \lambda$.

\section{Twisted Gabidulin Codes that are MRD}
\label{sec:mrd_property}

Not all twisted Gabidulin codes as defined in the preceding section will be MRD.
In this section, we first give a linear-algebraic condition for when this will be the case, and we then use this to describe an explicit family of twisted Gabidulin codes which are MRD.

\begin{lemma}
\label{lem:MRD_lemma}
Let $n,k,\tVec,\lambdaVec,\etaVec$ be chosen as in \cref{ssec:definition}.
For any $\S \subset \langle \alpha_1,\dots,\alpha_n \rangle_{\Fq}$ with $\dim_{\Fq} \S = k$, consider the homogeneous, linear system of equations in the $g_0,\ldots,g_{t_\ell-1}$:
\begin{equation}
\label{eq:MRD_lemma_system}
  \sum_{j=0}^{t_\ell - 1} g_j T^{(\S)}_{i, j} = 0  \quad \textrm{for } i = k, \ldots, k - 1 + t_\ell \ ,
\end{equation}
where
\begin{align}
  T^{(\S)}_{i,j} &= \left\{
  \begin{array}{ll}
     \eta_\kappa^{-1} \sigma^{j}(a_{i-j}) - \sum_{\mu=0}^{k-1} \lambda_{\kappa,\mu} \sigma^{j}(a_{\mu-j})  & \textrm{ if } i = k-1 + t_\kappa \textrm{ for } \kappa \in \{ 1,\ldots,\ell \}
    \\
    \sigma^j(a_{i-j}) & \textrm{otherwise}
  \end{array}
  \right. \ , \notag
\end{align}
and $\sum\limits_{i=0}^{k} a_i x^i := \Ann_\S(x)$ and $a_i := 0$ for $i < 0$ or $i \geq k$.
The twisted Gabidulin code $\Ctwisted$ is MRD if and only if there is no choice of $\S$ admitting a non-zero solution to the linear system.
\end{lemma}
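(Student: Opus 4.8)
The plan is to characterise when the code fails to be MRD and show this is exactly the solvability condition. Recall that $\Ctwisted = \ev{\multievpolys}{\alphaVec}$ is an $\Fqm$-linear code of dimension $k$, and since all polynomials in $\multievpolys$ have degree at most $k-1+t_\ell < n$, the evaluation map is injective. Hence $\Ctwisted$ fails to be MRD if and only if there is a nonzero $f \in \multievpolys$ whose evaluation vector $[f(\alpha_1),\dots,f(\alpha_n)]$ has rank at most $n-k$, i.e.\ the $\Fq$-span of the $f(\alpha_j)$ has dimension at most $n-k$. Equivalently, the $\Fq$-kernel of the operator polynomial $f(\cdot)$, intersected with $\langle\alpha_1,\dots,\alpha_n\rangle_{\Fq}$, has dimension at least $k$; that is, there exists a $k$-dimensional $\Fq$-subspace $\S \subseteq \langle\alpha_1,\dots,\alpha_n\rangle_{\Fq}$ with $\S \subseteq \ker f$. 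So the problem reduces to: \emph{there exists a $k$-dimensional $\S \subseteq \langle\alpha_1,\dots,\alpha_n\rangle_{\Fq}$ and a nonzero $f \in \multievpolys$ with $\S \subseteq \ker f$.}

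Next I would translate the condition $\S \subseteq \ker f$ into divisibility. By the properties recalled before \cref{def:eval_map}, $\S \subseteq \ker f$ if and only if the annihilator polynomial $\Ann_\S$ right-divides $f$ in $\Polys$, i.e.\ $f = g \cdot \Ann_\S$ for some $g = \sum_j g_j x^j \in \Polys$. Since $\deg \Ann_\S = k$ and $f$ has degree at most $k-1+t_\ell$, the quotient $g$ has degree at most $t_\ell - 1$, so $g = \sum_{j=0}^{t_\ell-1} g_j x^j$. Writing $\Ann_\S = \sum_{i=0}^k a_i x^i$ (monic, so $a_k=1$), the skew-multiplication rule gives that the coefficient of $x^i$ in $f = g\cdot\Ann_\S$ is $\sum_{j=0}^{t_\ell-1} g_j \sigma^j(a_{i-j})$, with the convention $a_i = 0$ outside $0 \le i \le k$. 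The requirement that $f$ lie in $\multievpolys$ is precisely a set of linear constraints on these coefficients: for each "plain" position $i \in \{k,\dots,k-1+t_\ell\}$ that is \emph{not} of the form $k-1+t_\kappa$, the coefficient of $x^i$ must vanish; and for each twist position $i = k-1+t_\kappa$, the coefficient of $x^i$ must equal $\eta_\kappa \lambda_\kappa(f_0,\dots,f_{k-1})$, where $f_\mu$ (for $\mu = 0,\dots,k-1$) is itself the coefficient of $x^\mu$ in $g\cdot\Ann_\S$, namely $f_\mu = \sum_{j=0}^{t_\ell-1} g_j \sigma^j(a_{\mu-j})$. Rearranging the twist constraint as $\eta_\kappa^{-1}(\text{coeff of }x^{k-1+t_\kappa}) - \sum_{\mu=0}^{k-1}\lambda_{\kappa,\mu} f_\mu = 0$ and substituting the expressions for both sides yields exactly $\sum_{j=0}^{t_\ell-1} g_j T^{(\S)}_{i,j} = 0$ with $T^{(\S)}_{i,j}$ as defined in the statement; the plain positions give the second case of $T^{(\S)}_{i,j}$. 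So \eqref{eq:MRD_lemma_system} is literally the system "$g \cdot \Ann_\S \in \multievpolys$".

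It remains to verify that the correspondence $g \leftrightarrow f = g\cdot\Ann_\S$ matches nonzero solutions with nonzero codewords of low rank. If $g \neq 0$ solves \eqref{eq:MRD_lemma_system}, then $f := g\cdot\Ann_\S$ is a nonzero element of $\multievpolys$ (nonzero because $\Polys$ is an integral domain) with $\S \subseteq \ker f$, so $\rk(\ev{f}{\alphaVec}) \le n - k$ and the code is not MRD. Conversely, if the code is not MRD, the reduction above produces a $k$-dimensional $\S$ and a nonzero $f \in \multievpolys$ with $\Ann_\S \mid_r f$; the quotient $g$ is then a nonzero solution of \eqref{eq:MRD_lemma_system} for that $\S$. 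Taking the contrapositive on both directions gives: $\Ctwisted$ is MRD $\iff$ no choice of $\S$ admits a nonzero solution, as claimed. The main obstacle I anticipate is bookkeeping: carefully expanding the skew product $g \cdot \Ann_\S$ coefficient-by-coefficient and checking that the index conventions ($a_i = 0$ for $i < 0$ or $i \ge k$, and that $\deg g \le t_\ell - 1$ is forced) line up so that the twist and plain constraints produce precisely the two cases of $T^{(\S)}_{i,j}$; once the translation to divisibility is in place, the rest is formal.
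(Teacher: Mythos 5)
Your proof is correct and follows essentially the same route as the paper: characterise failure of MRD via a nonzero $f \in \multievpolys$ whose kernel meets $\langle\alpha_1,\dots,\alpha_n\rangle_{\Fq}$ in a $k$-dimensional subspace $\S$, translate this to right-divisibility $f = g\cdot\Ann_\S$, and expand the skew product coefficient-by-coefficient to obtain exactly the linear system \eqref{eq:MRD_lemma_system}. Your added remarks (that $\deg g \le t_\ell-1$ is forced, and that $g\neq 0 \Rightarrow f\neq 0$ since $\Polys$ has no zero divisors) are sound and make implicit steps of the paper's argument explicit.
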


\begin{proof}
Let $f \in \multievpolys$ be a polynomial whose root space intersects with $\langle \alpha_1,\dots,\alpha_n \rangle$ in at least $k$ dimensions, i.e.,
\begin{align*}
\dim \left(\ker(f) \cap \langle \alpha_1,\dots,\alpha_n \rangle \right) \geq k.
\end{align*}
This means that there is a $k$-dimensional subspace $\S \subseteq \langle \alpha_1,\dots,\alpha_n \rangle$ such that the annihilator polynomial $\Ann_{\S}$ divides $f$ from the right, i.e. $f = g \cdot \Ann_\S$ for some $g \in \Polys$.
Write $g = \sum_{i=0}^{t_\ell-1} g_i x^i$ and $\Ann_\S = \sum_{i=0}^{k} a_i x^i$, as well as $a_i := 0$ for $i < 0$ or $i \geq k$.
Then the $i$-th coefficient of $f$ is given by
\begin{align*}
f_i = \sum\limits_{j=0}^{t_\ell-1} g_j \sigma^{j}\left( a_{i-j} \right) \ .
\end{align*}
On the other hand, for $i \geq k$, then $f_i$ also satisfies
\begin{align*}
f_{i} =
\begin{cases}
\eta_\kappa \lambda_\kappa(f_0,\dots,f_{k-1}) & \textrm{if } i=k-1+t_\kappa \text{ for } \kappa \in \{1,\dots,\ell\}, \\
0 &  \textrm{otherwise} \ .
\end{cases}
\end{align*}
For the case $i=k-1+t_\kappa$, combining and rewriting yields
\begin{align*}
0 &= \sum_{j=0}^{t_\ell-1} g_j \cdot \left( \eta_\kappa^{-1} \sigma^{j}\left( a_{i-j}\right) - \sum_{\mu=0}^{k-1} \lambda_{\kappa,\mu} \sigma^{j}\left( a_{\mu-j} \right) \right).
\end{align*}
Thus, for a given $\S$, a non-zero solution of the linear system \eqref{eq:MRD_lemma_system} corresponds to a non-zero polynomial $f = g \cdot \Ann_{\S} \in \multievpolys$ with
\begin{align*}
\rk(\ev{f}{\alphaVec}) = n-\dim \left(\ker(f) \cap \langle \alpha_1,\dots,\alpha_n \rangle \right) \leq n-k.
\end{align*}
The code $\Ctwisted$ is MRD if and only if all non-zero $f \in \multievpolys$ result in codewords of rank $\geq n-k+1$, which is true if and only if the system \eqref{eq:MRD_lemma_system} has no non-zero solution for any subspace $\S$. \qed
\end{proof}

For a given subspace $\S$, the system \eqref{eq:MRD_lemma_system} is of the form
\setcounter{MaxMatrixCols}{20}
\begin{align}\tiny
\underset{=: \, \B_\S}{\underbrace{\begin{bmatrix}
\etai{\numTwists} + \mybox& \mybox& \dots & \mybox & \mybox & \mybox & \dots & \mybox & \mybox & \dots\\
\mybox & 1 & \mymat{0} & \mymat{\dots} & \mymat{0} & \mymat{0} & \mymat{0} & \mymat{\dots} & \mymat{0} &  \mymat{0} & \mymat{\dots} \\
\vdots & \vdots & \ddots & \mymat{\vdots} & \mymat{\vdots} & \mymat{\vdots} & \mymat{\ddots} & \mymat{\vdots} & \mymat{\vdots} & \mymat{\dots} \\
\mybox & \mybox & \dots & 1 & \mymat{0} & \mymat{0} & \mymat{\dots} & \mymat{0} & \mymat{0} & \mymat{\dots} \\
\etai{\numTwists-1} \mybox + \mybox &\etai{\numTwists-1} \mybox + \mybox & \dots & \etai{\numTwists-1} \mybox + \mybox & \etai{\numTwists-1} + \mybox & \mybox & \dots & \mybox & \mybox & \dots \\
\mybox & \mybox & \dots & \mybox & \mybox & 1 & \mymat{\dots} & \mymat{0} & \mymat{0} \mymat{\dots}  \\
\vdots & \vdots & \ddots & \vdots & \vdots & \vdots & \ddots & \mymat{\vdots} & \mymat{\vdots}  \mymat{\dots}\\
\vdots & \vdots & \ddots & \vdots & \vdots & \vdots & \ddots & \mymat{\vdots} & \mymat{\vdots} \mymat{\dots} \\
\mybox & \mybox & \dots & \mybox & \mybox & \mybox & \dots & 1 & \mymat{0}  & \mymat{\dots}  \\
\etai{1} \mybox + \mybox & \etai{1} \mybox + \mybox & \dots & \etai{1} \mybox + \mybox & \etai{1} \mybox + \mybox & \etai{1} \mybox + \mybox & \dots & \etai{1} \mybox + \mybox & \etai{1} + \mybox & \dots \\
\vdots & \vdots & \ddots & \vdots & \vdots & \vdots & \ddots & \vdots & \vdots & \ddots
\end{bmatrix}}}
\begin{bmatrix}
g_{t_{\numTwists}} \\
g_{t_{\numTwists}-1} \\
\vdots \\
g_{1} \\
g_{0}
\end{bmatrix}
=
\ve{0}, \label{eq:B_S_matrix}
\end{align}
where the boxes $\mybox$ represent elements in the $\Fq$-span of the $\lambda_{i,j}$, $\alpha_1,\dots,\alpha_n$, and their $q$-powers.
The diagonal elements are either $1$ (if the row corresponds to an index $i$ with $i \neq k-1+t_\kappa$ for all $\kappa$) or $\etai{\kappa} + \mybox$ (if $i \neq k-1+t_\kappa$) due to $\sigma^{i-k}(a_{i - (i-k)}) = \sigma^{i-k}(a_{k}) = 1$ for all $i=k,k+1,\dots,k-1+t_\numTwists$ (note that the diagonal elements are the $T_{i,j}^{(\S)}$ of Equation~\eqref{eq:MRD_lemma_system} with $j=i-k$).
Also, all elements above the diagonal do not depend on the $\etai{\kappa}$ since $a_{i-j} = 0$ for all $i>j+k$.

Using \cref{lem:MRD_lemma}, we can give the following sufficient condition for a twisted Gabidulin code to be MRD.

\begin{theorem}\label{thm:MRD_sufficient_condition}
Let $s_0,\dots,s_\ell \in \NN$ such that $\Fq \subseteq \Fqsi{0} \subsetneq \Fqsi{1} \subsetneq \dots \subsetneq \Fqsi{\ell} = \Fq$ is a chain of subfields.
Let $k < n \leq s_0$ and $\alpha_1,\dots,\alpha_n \in \Fqsi{0}$ be linearly independent over $\Fq$, and let $\tVec$, $\lambdaVec$, and $\etaVec$ be chosen as in \cref{ssec:definition} with the additional requirements $\eta_i \in \Fqsi{i} \setminus \Fqsi{i-1}$ and $\lambda_{i,j} \in \Fqsi{0}$ for all $i,j$.
Then, the twisted Gabidulin code $\Ctwisted$ is MRD.
\end{theorem}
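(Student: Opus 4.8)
The plan is to invoke \cref{lem:MRD_lemma} and show that for the specified field-tower conditions, the system \eqref{eq:MRD_lemma_system} — equivalently $\B_\S \ve g = \ve 0$ from \eqref{eq:B_S_matrix} — has no non-zero solution, for \emph{every} $k$-dimensional $\Fq$-subspace $\S \subseteq \langle \alpha_1,\dots,\alpha_n\rangle_{\Fq}$. Fix such an $\S$ and let $\Ann_\S = \sum_{i=0}^k a_i x^i$. Because $\S \subseteq \Fqsi{0}$ and $\sigma$ restricts to a generator of $\Gal(\Fqsi{0}/\Fq)$ (here using $n \le s_0$ and $\gcd(i,m)=1 \Rightarrow \gcd(i,s_0)=1$), all the coefficients $a_i$ lie in $\Fqsi{0}$, and hence so do all the $q$-powers $\sigma^j(a_{i-j})$. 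The matrix $\B_\S$ therefore has the structure described after the lemma: its only entries \emph{not} lying in $\Fqsi{0}$ are the $\ell$ ``twisted'' rows, where each entry of the $\kappa$-th twisted row has the form $\eta_\kappa^{-1}(\text{something in }\Fqsi{0}) + (\text{something in }\Fqsi{0})$, with the diagonal entry being exactly $\eta_\kappa^{-1} + (\text{something in }\Fqsi{0})$ since $\sigma^{i-k}(a_k) = 1$.

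The key step is a ``peeling'' argument on the tower, processing the twists from $\kappa = \ell$ down to $\kappa = 1$. Suppose $\ve g = (g_0,\dots,g_{t_\ell - 1})$ is a non-zero solution; I will first argue $g_{t_\ell} = 0$ wait — more carefully: the untwisted rows (those with diagonal $1$) express, for each non-twist index $i \in \{k,\dots,k-1+t_\ell\}$, the variable $g_{i-k}$ as an $\Fqsi{0}$-linear combination of the variables $g_{i'-k}$ with $i' > i$; reading these from the top row downward lets us eliminate every $g_j$ whose index $j$ is not of the form $t_\kappa$, expressing it via the $g_{t_\kappa}$'s with $\Fqsi{0}$-coefficients. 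Substituting into the $\ell$ twisted rows yields a reduced $\ell \times \ell$ homogeneous system in $(g_{t_1},\dots,g_{t_\ell})$ whose coefficient matrix $\widetilde{\B}_\S$ has $(\kappa,\kappa)$-entry $\eta_\kappa^{-1} + c_{\kappa\kappa}$ with $c_{\kappa\kappa} \in \Fqsi{0}$, upper-triangular-part entries in $\Fqsi{0}$ (entries above the diagonal do not involve the $\eta$'s, as noted in the excerpt), and $(\kappa,\kappa')$-entry for $\kappa' < \kappa$ of the form $\eta_\kappa^{-1} c + c'$ with $c, c' \in \Fqsi{0}$. Now induct downward: the last row reads $\sum_{\kappa' } (\eta_\ell^{-1} b_{\kappa'} + b'_{\kappa'}) g_{t_{\kappa'}} = 0$; multiply by $\eta_\ell$ to get $\sum b_{\kappa'} g_{t_{\kappa'}} = -\eta_\ell \sum b'_{\kappa'} g_{t_{\kappa'}}$. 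Arguing that $\eta_\ell \notin \Fqsi{\ell-1} \supseteq \Fqsi{0}$ forces both sides to behave rigidly: in fact the cleanest phrasing is that $\det \widetilde{\B}_\S$, expanded as a polynomial in $\eta_1^{-1},\dots,\eta_\ell^{-1}$ with $\Fqsi{0}$-coefficients, has the monomial $\eta_1^{-1}\cdots\eta_\ell^{-1}$ occurring with coefficient $1$ (the product of the ``$\eta_\kappa^{-1}$'' diagonal parts), while every other monomial involving $\eta_\ell^{-1}$ to the first power has an $\Fqsi{0}$-coefficient; since $\eta_\ell \in \Fqsi{\ell} \setminus \Fqsi{\ell-1}$ and $\Fqsi{0} \subseteq \Fqsi{\ell-1}$, the term $\eta_\ell^{-1}\cdot(\text{unit in }\Fqsi{\ell-1}[\eta_1^{-1},\dots,\eta_{\ell-1}^{-1}])$ cannot be cancelled, and iterating this down the tower ($\eta_{\ell-1} \in \Fqsi{\ell-1}\setminus\Fqsi{\ell-2}$, etc.) shows $\det \widetilde{\B}_\S \neq 0$, hence $\ve g = \ve 0$.

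The main obstacle is making the last inductive step fully rigorous: one must verify that after clearing denominators, $\eta_1\cdots\eta_\ell \det \widetilde{\B}_\S$ is a polynomial expression in $\eta_1,\dots,\eta_\ell$ over $\Fqsi{0}$ in which the constant-in-$\eta_\ell$ part is a nonzero element (times $\eta_1\cdots\eta_{\ell-1}$-stuff) lying in $\Fqsi{\ell-1}$, and that the $\eta_\ell$-dependent part has coefficients in $\Fqsi{\ell-1}$ as well, so that the defining relation $\eta_\ell \notin \Fqsi{\ell-1}$ (degree-of-minimal-polynomial argument, or just: $\Fqsi{\ell-1}(\eta_\ell) = \Fqsi{\ell}$ strictly larger) precludes a vanishing. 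One subtlety to handle carefully is that the ``boxes'' $\mybox$ in the non-diagonal positions of the twisted rows also contain $\eta_\kappa^{-1}$ \emph{factors} only for the current $\kappa$ — entries above row $k-1+t_\kappa$ in that row's column range do not, per the remark that above-diagonal entries are $\eta$-free — so the bookkeeping of which $\eta$'s appear in which entry must be done with the explicit $T^{(\S)}_{i,j}$ formula in hand rather than just the schematic \eqref{eq:B_S_matrix}. Once the determinant's leading-monomial structure is pinned down, the field-tower hypothesis does all the remaining work.
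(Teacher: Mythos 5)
Your strategy is the paper's: invoke \cref{lem:MRD_lemma}, observe that $\alpha_i, \lambda_{i,j} \in \Fqsi{0}$ and $\sigma(\Fqsi{0})=\Fqsi{0}$ force every non-$\eta$ entry of $\B_\S$ into $\Fqsi{0}$, view the $\eta_\kappa^{-1}$ as indeterminates, and walk up the subfield chain to conclude $\det\B_\S\neq 0$. The paper's execution is cleaner in one respect: instead of Schur-complementing out the non-twist variables and bookkeeping the $\eta$-dependence of the reduced $\ell\times\ell$ matrix (where you also have a direction slip --- in twist row $i=k-1+t_\kappa$ the entry $T^{(\S)}_{i,j}$ carries an $\eta_\kappa^{-1}$ factor precisely for $j\geq t_\kappa-1$, so among the twist columns it is $\kappa'\geq\kappa$, not $\kappa'<\kappa$ --- and the reduced entries acquire further $\eta_{\kappa}^{-1}$ contributions from substituted non-twist rows), the paper just expands $\det\B_\S$ recursively along the top row of the bottom-right $\mu\times\mu$ submatrices $\B_\S^{(\mu)}$: $\det\B_\S^{(\mu)}=\det\B_\S^{(\mu-1)}$ when $\mu$ is not a twist index, and $\det\B_\S^{(t_\kappa)}=(\eta_\kappa^{-1}+T_{t_\kappa})\det\B_\S^{(t_{\kappa-1})}+U_{t_\kappa}$ with $T_{t_\kappa}\in\Fqsi{0}$ and $U_{t_\kappa}\in\Fqsi{0}[\eta_1^{-1},\dots,\eta_{\kappa-1}^{-1}]$. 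Inducting on $\kappa$ and substituting the actual $\eta_\kappa\in\Fqsi{\kappa}\setminus\Fqsi{\kappa-1}$ then gives $\det\B_\S^{(t_\kappa)}\in\Fqsi{\kappa}\setminus\{0\}$ directly, with no need to isolate the leading monomial $\eta_1^{-1}\cdots\eta_\ell^{-1}$. The ``remaining work'' you flag at the end dissolves once the determinant is organised as this recursion rather than as a reduced $\ell\times\ell$ system.
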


\begin{proof}
We prove the claim using \cref{lem:MRD_lemma}.
Let $\S \subseteq \langle \alpha_1,\dots,\alpha_n \rangle$ be a $k$-dimensional subspace.
We show that the system \eqref{eq:MRD_lemma_system} has no non-zero solution.
Since $\lambda_{i,j} \in \Fqsi{0}$, $a_i \in \langle \alpha_1,\dots,\alpha_n \rangle \subseteq \Fqsi{0}$, $\sigma(\Fqsi{0}) = \Fqsi{0}$, the boxes $\mybox$ of the system's matrix $\B_\S$ as in \eqref{eq:B_S_matrix} represent elements from $\Fqsi{0}$.

We now consider the $\etai{\kappa}$'s to be indeterminates.
This means that $\det(\B_\S) \in \Fqsi{0}[\etai{1},\dots,\etai{\numTwists}]$ is a multivariate polynomial, where each indeterminate $\etai{\kappa}$ appears at most of degree $1$ in each monomial.
Let $\B_\S^{(\mu)}$ be the $(\mu \times \mu)$-bottom-right submatrix of $\B_\S$. We distinguish two cases:
\begin{enumerate}[label=(\roman*)]
\item If $\mu \neq t_\kappa$ for all $\kappa$, then the first row of $\B_\S^{(\mu)}$ is of the form $[1, 0, \dots, 0]$ and by Laplace's rule we get
\begin{align*}
\det\left(\B_\S^{(\mu)}\right) = \det\left(\B_\S^{(\mu-1)}\right).
\end{align*}
\item If $\mu = t_\kappa$ for some $\kappa$, then $\B_\S^{(\mu)}$ contains only $\etai{1},\dots,\etai{\kappa-1}$ in its rows $2$ to $\mu$ and since the first row is of the form $[\etai{\kappa}+\mybox, \mybox, \dots, \mybox]$, the determinant fulfills
\begin{align*}
\det\left(\B_\S^{(\mu)}\right) = (\etai{\kappa} + T_\mu) \cdot \det\left(\B_\S^{(\mu-1)}\right) + U_\mu \in \Fqsi{0}[\etai{1},\dots,\etai{\kappa}],
\end{align*}
where $T_\mu \in \Fqsi{0}$ and $U_\mu \in \Fqsi{0}[\etai{1},\dots,\etai{\kappa-1}]$.
\end{enumerate}
Combined, we get $\det(\B_\S^{(t_\kappa)}) = (\etai{\kappa} + T_{t_\kappa}) \det(\B_\S^{(t_{\kappa-1})}) + U_{t_\kappa} \in \Fqsi{0}[\etai{1},\dots,\etai{\kappa}]$, where $\det(\B_\S^{(t_1)}) = \etai{1}$, and by recursively substituting $\eta_\kappa \in \Fqsi{\kappa} \setminus \Fqsi{\kappa-1}$ for $\kappa=1,\dots,\numTwists$,  we obtain
\begin{align*}
\det\left(\B_\S^{(t_\kappa)}\right) \in \Fqsi{\kappa} \setminus \{0\},
\end{align*}
since $\etai{\kappa} \in \Fqsi{\kappa} \setminus \Fqsi{\kappa-i}$, $\det(\B_\S^{(t_{\kappa-1})}) \in \Fqsi{\kappa-1} \setminus \{0\}$, and $U_{t_\kappa} \in \Fqsi{\kappa-1}$.
Hence, also $\det( \B_\S ) = \det(\B_\S^{(t_\numTwists)}) \neq 0$ and System \eqref{eq:MRD_lemma_system} has only the zero solution. \qed
\end{proof}

Theorem~\ref{thm:MRD_sufficient_condition} provides a tool to systematically construct MRD twisted Gabidulin codes.
For some $m$, we can obtain codes of length up to $2^{-\ell}m$ in this way:

\begin{corollary}
Let $\ell \in \ZZ_{>0}$ and $2^\ell \mid m$. Then there is an $\ell$-twisted MRD code of length $n=2^{-\ell}m$ over $\Fqm$.
\end{corollary}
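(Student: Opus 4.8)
\noindent The plan is to apply \cref{thm:MRD_sufficient_condition} to the sparsest admissible chain of subfields — the one whose degrees over $\Fq$ double at each step — since this is exactly what pushes the bottom field down to degree $2^{-\ell}m$. Concretely, as $2^\ell \mid m$, the integers $s_i := 2^{\,i-\ell}m$ for $i = 0,1,\dots,\ell$ are well-defined and positive, satisfy $s_{i-1}\mid s_i$ and $s_{i-1}<s_i$ for every $i$, and have $s_\ell = m$ and $s_0 = 2^{-\ell}m$. Hence $\Fq \subseteq \Fqsi{0} \subsetneq \Fqsi{1} \subsetneq \dots \subsetneq \Fqsi{\ell} = \Fqm$ is a chain of subfields of the shape required by the theorem.

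Next I would fix the length $n := s_0 = 2^{-\ell}m$, assume $m$ is large enough that $n \ge \ell+2$ (for smaller $m$ the parameters of an $\ell$-twisted code cannot be met non-degenerately), and pick any dimension $k$ with $1 \le k \le n-\ell-1$. I would take the evaluation points $\alpha_1,\dots,\alpha_n$ to be an $\Fq$-basis of $\Fqsi{0}$, which exists since $[\Fqsi{0}:\Fq] = s_0 = n$; by construction they lie in $\Fqsi{0}$ and are $\Fq$-linearly independent. For the twist data I would set $t_i := i$ for $i=1,\dots,\ell$ (so that $0 < t_1 < \dots < t_\ell = \ell < n-k$, using $k \le n-\ell-1$), choose each $\lambda_i$ to be the coordinate projection $(f_0,\dots,f_{k-1}) \mapsto f_0$ — a nonzero $\Fqm$-linear map whose sole nonzero coefficient is $\lambda_{i,0} = 1 \in \Fqsi{0}$ — and pick $\eta_i \in \Fqsi{i}\setminus\Fqsi{i-1}$, which is nonempty precisely because the inclusion $\Fqsi{i-1}\subsetneq\Fqsi{i}$ is strict. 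With these choices every hypothesis of \cref{thm:MRD_sufficient_condition} holds, so $\Ctwisted$ is MRD; it is $\ell$-twisted and has length $n = 2^{-\ell}m$, as required.

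I do not anticipate a genuine obstacle: the whole proof amounts to checking that the theorem's hypotheses are simultaneously satisfiable, and all of the choices above are either forced or entirely routine. The single point worth isolating is why $2^{-\ell}m$ is the best length obtainable this way: a strictly increasing chain $\Fqsi{0}\subsetneq\dots\subsetneq\Fqsi{\ell}$ inside $\Fqm$ has $[\Fqsi{\ell}:\Fqsi{0}] \ge 2^\ell$, so $s_0 \le 2^{-\ell}m$, and the doubling chain attains equality. The only hypothesis not automatic from $2^\ell \mid m$ is the mild size condition $m \ge 2^\ell(\ell+2)$ ensuring a valid nonzero $k$, which is implicit in the statement.
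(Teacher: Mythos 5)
Your proof is correct and is exactly what the paper intends (the corollary is stated without proof, as an immediate application of Theorem~\ref{thm:MRD_sufficient_condition}): you instantiate the theorem with the doubling chain $s_i = 2^{i-\ell}m$, take the evaluation points to be an $\Fq$-basis of $\Fqsi{0}$, and make the routine choices of $t_i$, $\lambda_i$, $\eta_i$. Your observation that a strictly increasing chain of $\ell+1$ subfields forces $s_0 \le 2^{-\ell}m$, so the doubling chain is optimal, and your note on the implicit size requirement on $m$ (needed so that a valid $k$ with $1 \le k \le n-\ell-1$ exists) are both correct and worth recording.
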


\section{Non-Equivalence to Other MRD Codes}
\label{sec:nonequivalence}

In this section, we show that the new family of twisted Gabidulin codes contains codes that are neither equivalent to a generalised Gabidulin nor to the twisted Gabidulin codes constructed in \cite{sheekey2015new}.
We use the following notion for equivalence of linear rank-metric codes.

\begin{definition}[\!\cite{morrison2014equivalence}]\label{def:equivalence}
Two linear rank-metric codes $\Code,\Code' \subseteq \Fqm^n$ are (semi-linearly) equivalent if there are $\lambda \in \Fqm^\ast$, $\A \in \mathrm{GL}_n(q)$, and $\sigma \in \Gal(\Fqm/\Fq)$ such that
\begin{align*}
\Code' = \sigma(\lambda \Code) \A,
\end{align*}
where $\sigma(\lambda \Code) \A := \left\{ \left[ \sigma(\lambda c_1), \dots, \sigma(\lambda c_n)  \right] \cdot \A :  \left[c_1, \dots,c_n \right] \in \Code \right\}$.
\end{definition}

Since for twisted Gabidulin codes, we can only guarantee them to be MRD for $n<m$, we cannot directly rely on the tools developed in \cite{sheekey2015new} for proving the inequivalence to Gabidulin codes. We first need to state two lemmas.

\begin{lemma}\label{lem:fg_alpha_beta}
Suppose $\alphaVec= [\alpha_1,\cdots,\alpha_n]$ is a list of elements of $\Fqm$, linearly independent over $\Fq$, and $f,g \in \Fqn[x,\sigma]$, with $\deg(f),\deg(g)<m$. Then 
\[
\ev{f}{\alphaVec}= \ev{g}{\alphaVec} \Leftrightarrow f-g \equiv 0 \modr \Ann_{\alphaVec}
\]
\end{lemma}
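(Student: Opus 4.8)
The plan is to reduce the statement to a divisibility fact in $\Polys$ and then settle that fact by right Euclidean division. Set $h := f - g \in \Polys$. Operator evaluation is additive, so $\ev{f}{\alphaVec} = \ev{g}{\alphaVec}$ holds if and only if $\ev{h}{\alphaVec} = \ve 0$, i.e.\ $h(\alpha_1) = \dots = h(\alpha_n) = 0$. Since $\ker(h)$ is an $\Fq$-linear subspace of $\Fqm$, this is in turn equivalent to $\langle \alpha_1, \dots, \alpha_n\rangle_{\Fq} \subseteq \ker(h)$. Hence it remains to prove
\[
  \langle \alpha_1, \dots, \alpha_n\rangle_{\Fq} \subseteq \ker(h) \quad\Leftrightarrow\quad \Ann_{\alphaVec} \text{ right-divides } h ,
\]
which is exactly the assertion $h \equiv 0 \modr \Ann_{\alphaVec}$. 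In effect this re-derives, now without fixing an ambient subspace $\mathcal V$, the description of $\ker \evOp_{\alphaVec}$ recorded just after \cref{def:eval_map}.

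For ``$\Leftarrow$'' I would use that operator evaluation is multiplicative in the sense $(a \cdot b)(\gamma) = a\big(b(\gamma)\big)$ for all $a,b \in \Polys$ and $\gamma \in \Fqm$, a standard property of the operator evaluation map \cite{boucher2014linear}. Thus if $h = q \cdot \Ann_{\alphaVec}$ for some $q \in \Polys$, then for every $\gamma \in \langle \alpha_1, \dots, \alpha_n \rangle_{\Fq}$ we get $h(\gamma) = q\big(\Ann_{\alphaVec}(\gamma)\big) = q(0) = 0$, using that $\langle \alpha_1, \dots, \alpha_n\rangle_{\Fq} \subseteq \ker(\Ann_{\alphaVec})$ by the definition of the annihilator polynomial.

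For ``$\Rightarrow$'' I would perform right division in the right Euclidean ring $\Polys$: write $h = q \cdot \Ann_{\alphaVec} + r$ with $\deg r < \deg \Ann_{\alphaVec}$. Recall from the preliminaries that $\deg \Ann_{\alphaVec} = \dim_{\Fq}\langle \alpha_1, \dots, \alpha_n\rangle_{\Fq} = n$, since the $\alpha_i$ are linearly independent over $\Fq$. For every $\gamma \in \langle \alpha_1, \dots, \alpha_n\rangle_{\Fq}$ we have $\gamma \in \ker(h)$ and, as above, $\Ann_{\alphaVec}(\gamma) = 0$, hence $0 = h(\gamma) = q\big(\Ann_{\alphaVec}(\gamma)\big) + r(\gamma) = r(\gamma)$. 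So the $n$-dimensional $\Fq$-space $\langle \alpha_1, \dots, \alpha_n\rangle_{\Fq}$ lies in $\ker(r)$; but $\dim_{\Fq}\ker(r) \le \deg r < n$, which forces $r = 0$, i.e.\ $\Ann_{\alphaVec}$ right-divides $h$.

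I do not expect a genuine obstacle: the argument only uses two facts already in \cref{sec:preliminaries} — that operator evaluation turns skew-polynomial multiplication into composition of the associated $\Fq$-linear maps, and the bound $\dim_{\Fq}\ker(a) \le \deg a$ — together with right Euclidean division. The hypothesis $\deg f, \deg g < m$ plays no essential role in the proof; it merely fixes the intended regime (so that $\Ann_{\alphaVec}$, of degree $n \le m$, yields a canonical right remainder), and the lemma could be stated without it.
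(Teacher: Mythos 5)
Your proof is correct and complete. The paper itself gives no proof of this lemma at all --- it is treated as an immediate consequence of the remark after \cref{def:eval_map}, which records that $\ker \evOp_{\alphaVec}$ consists precisely of right multiples of $\Ann_{\alphaVec}$ --- and your argument (reduce to $h:=f-g$; use multiplicativity of operator evaluation, $(a\cdot b)(\gamma)=a(b(\gamma))$, for $\Leftarrow$; use right Euclidean division together with $\dim_{\Fq}\ker(r)\le \deg r < n = \deg \Ann_{\alphaVec}$ for $\Rightarrow$) is exactly the standard derivation of that fact. Your side observation that the hypothesis $\deg f,\deg g < m$ is never actually used in the argument is also correct.
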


\begin{lemma}
\label{lem:equiv}
Suppose $\cV,\cV'$ are two $\Fqm$-subspaces of $\Fqm[x,\sigma]$, and $\alphaVec,\betaVec$ two lists of $n$ elements of $\Fqm$, each linearly independent over $\Fq$. Then $\eva{\cV}$ is semilinearly equivalent to $\evb{\cV'}$ if and only if there exist elements $\psi,\phi\in \Fqm[x,\sigma] $ such that $\psi$ is a monomial, $\gcrd(\phi,x^m-1)=1$, and
\[
\{f \modr \malpha :f \in \cV\} = \{\psi g\phi \modr \malpha:g\in \cV'\}.
\]
Furthermore, if $\eva{\cV}$ is equivalent to $\evb{\cV'}$ then there exist $\psi,\phi\in \Fqm[x,\sigma] $ such that $\psi$ is monomial, $\deg(\phi')<n$, $\gcrd(\phi',\malpha)=1$, and 
\[
\{f \modr \malpha :f \in \cV\} = \{\psi g\phi' \modr \malpha:g \in \cV'\}.
\]
\end{lemma}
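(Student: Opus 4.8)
The plan is to translate each ingredient of a semilinear equivalence — the scalar $\lambda$, the Frobenius power, and the matrix $\A\in\GL_n(q)$ — together with the change between the two evaluation lists $\alphaVec$ and $\betaVec$, into left- and right-multiplication by suitable skew polynomials, and then to invoke \cref{lem:fg_alpha_beta} to pass from an equality of codes to an equality of polynomial sets modulo $\malpha$. Throughout, if $\cU\subseteq\Polys$ is a subspace then $\psi\cU\phi$ denotes $\{\psi g\phi:g\in\cU\}$.

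First I would record the dictionary I use. Operator evaluation $a\mapsto\hat a$ is a ring homomorphism $\Polys\to\End_{\Fq}(\Fqm)$ (so $\widehat{ab}=\hat a\circ\hat b$), with kernel the two-sided ideal $(x^m-1)$, and it is surjective because $\sigma^0,\dots,\sigma^{m-1}$ form an $\Fqm$-basis of $\End_{\Fq}(\Fqm)$. Hence $\gcrd(\phi,x^m-1)=1$ holds exactly when $\hat\phi$ is an $\Fq$-automorphism of $\Fqm$, and conversely every such automorphism is $\hat\phi$ for some $\phi$. Since every $\alpha_i$ is a root of $x^m-1$, the annihilator $\malpha$ right-divides $x^m-1$; and since $\ker\evOp_{\alphaVec}=\Polys\,\malpha$, one has $\eva{h}=\eva{h'}\iff h\equiv h'\modr\malpha$ for all $h,h'\in\Polys$ (the degree hypothesis in \cref{lem:fg_alpha_beta} is not needed for this), and therefore $\eva{P}=\eva{Q}$ for $\Fqm$-subspaces $P,Q$ iff $\{p\modr\malpha:p\in P\}=\{q\modr\malpha:q\in Q\}$. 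I will also use the identities $\sigma^r\bigl(g(\gamma)\bigr)=(x^rg)(\gamma)$ and $\hat g\bigl(\hat\phi(\gamma)\bigr)=(g\phi)(\gamma)$, which are instances of the homomorphism property.

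For the "only if'' direction I would use that semilinear equivalence is symmetric and write it as $\eva{\cV}=\tau(\lambda\,\evb{\cV'})\A$, then absorb the operations one at a time on the $\cV'$-side. First $\lambda\,\evb{\cV'}=\evb{\lambda\cV'}$. Next, choosing $r$ with $\tau=\sigma^r$ (possible since $\sigma$ generates $\Gal(\Fqm/\Fq)$), we get $\tau(\evb{g})=\evb{x^rg}$, so after these two steps the right-hand side is $\evb{\psi\cV'}$ for the monomial $\psi:=\sigma^r(\lambda)x^r$. Then, choosing an $\Fq$-automorphism $\hat\nu$ of $\Fqm$ with $\hat\nu(\beta_j)=\sum_i A_{ij}\beta_i$, one checks $\evb{\psi g}\A=\evb{\psi g\nu}$, with $\gcrd(\nu,x^m-1)=1$. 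Finally, choosing an $\Fq$-automorphism $\hat\rho$ of $\Fqm$ with $\hat\rho(\alpha_i)=\beta_i$, one has $\evb{h}=\eva{h\rho}$, with $\gcrd(\rho,x^m-1)=1$. Combining these, $\eva{\cV}=\eva{\psi\cV'\phi}$ with $\phi:=\nu\rho$; since $\hat\phi=\hat\nu\circ\hat\rho$ is an automorphism, $\gcrd(\phi,x^m-1)=1$, and the dictionary turns this equality of codes into $\{f\modr\malpha:f\in\cV\}=\{\psi g\phi\modr\malpha:g\in\cV'\}$. The "if'' direction runs the same chain backwards: applying $\evOp_{\alphaVec}$ to the given polynomial identity (legitimate since its kernel is $\Polys\,\malpha$) gives $\eva{\cV}=\eva{\psi\cV'\phi}$, and then reading the four absorptions in reverse recovers the equivalence — the monomial $\psi$ supplies the scalar and the Frobenius power, while $\phi$, being invertible modulo $x^m-1$, supplies the $\GL_n(q)$-action together with the change of evaluation list back to $\betaVec$.

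For the "furthermore'', starting from the $\psi,\phi$ produced above I would replace $\phi$ by its right remainder $\phi':=\phi\modr\malpha$, of degree $<n$. Writing $\phi-\phi'=q\malpha$, we get $\psi g\phi-\psi g\phi'=\psi g q\malpha\in\Polys\,\malpha$ for every $g$, so $\psi g\phi\equiv\psi g\phi'\modr\malpha$ and the right-hand set is unchanged; moreover $\gcrd(\phi',\malpha)=\gcrd(\phi,\malpha)$, which right-divides $\gcrd(\phi,x^m-1)=1$ because $\malpha\mid x^m-1$, so $\gcrd(\phi',\malpha)=1$. I expect the main obstacle to be keeping the handedness in the dictionary correct throughout: checking that the matrix $\A$ and the change of evaluation points are realised by \emph{right}-multiplication by the automorphism-valued polynomials $\nu,\rho$ (i.e.\ that operator evaluation composes as $\widehat{ab}=\hat a\circ\hat b$), and, in the converse direction, correctly identifying which $\Fq$-automorphism of $\Fqm$ the given factor $\phi$ induces on $\langle\alpha_1,\dots,\alpha_n\rangle_{\Fq}$ and matching it against $\betaVec$.
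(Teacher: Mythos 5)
Your proposal is correct and follows essentially the same route as the paper: identify the monomial $\psi=\sigma^r(\lambda)x^r$ with the scalar-and-Frobenius part of the equivalence, realise the $\GL_n(q)$-action and the change of evaluation basis by a skew polynomial $\phi$ satisfying $\hat\phi(\alpha_i)=\sum_j A_{ji}\beta_j$ with $\gcrd(\phi,x^m-1)=1$, and then pass to sets modulo $\malpha$ via \cref{lem:fg_alpha_beta}, reducing $\phi$ to $\phi'=\phi\modr\malpha$ for the final part. The only presentational difference is that you factor $\phi=\nu\rho$ into a ``change of basis'' and a ``matrix'' piece (each extended to an $\Fq$-automorphism of $\Fqm$), whereas the paper chooses $\phi$ in one step and argues directly that it can be taken to vanish nowhere on $\Fqm^{*}$; both yield the same $\phi$ and the same conclusion.
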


\begin{proof}
Suppose $\eva{\cV}=\sigma^i(\lambda\evb{\cV'})\mathbf{A}$.  Let $\psi=\lambda^{\sigma^{\ell}} x^{\ell}$, and let $\phi$ be such that $\phi(\alpha_i)=\sum_j A_{ji}\beta_j$, where the $A_{ji}$'s are the entries of $\mathbf{A}$. As $\mathbf{A}$ is invertible, and as $\alphaVec$ and $\betaVec$ are linearly independent, $\phi$ can be chosen so that it does not evaluate to zero on any element of $\Fqm^{*}$; in other words, $\gcrd(\phi,x^m-1)=1$.
Now 
\begin{align*}
\sigma^\ell(\lambda\evb{g})\mathbf{A} &= \left(\lambda^{\sigma^\ell}(x^\ell g)\left(\sum_j A_{j1}\beta_j\right),\ldots,\lambda^{\sigma^\ell}(x^\ell g)\left(\sum_j A_{jn}\beta_n\right)\right) \\
&=(\lambda^{\sigma^\ell}(x^\ell g)(\phi(\alpha_1)),\ldots,\lambda^{\sigma^\ell}(x^\ell g)(\phi(\alpha_1)) = \ev{\psi g\phi}{\alpha},
\end{align*}
and so 
\begin{align*}
\{\eva{f}:f \in \cV\} = \{\evb{g}:g\in \cV'\} = \{ \eva{\psi g\phi}:g\in \cV'\},
\end{align*}
so taking this together with Lemma~\ref{lem:fg_alpha_beta} gives the first result.

For the second part, we take $\phi'$ to be the remainder of $\phi$ on right division by $\malpha$, i.e. $\phi = a\malpha +\phi'$, with $\deg(\phi')<\deg(\malpha)=n$. Then the result follows immediately. \qed
\end{proof}

In the following, we consider the special case of twisted Gabidulin codes, given by evaluation polynomials of the form
\begin{align}
\cV_{k,t,\eta} := \left\{\sum_{i=0}^{k-1} f_ix^i + \eta f_0 x^{k-1+t}:f_i\in \Fqm\right\}. \label{eq:VsimpleTwists}
\end{align}
Note that these are contained in the codes constructed in \cite{sheekey2015new} precisely when $t=1$, and with Gabidulin codes precisely when $\eta=0$. The following theorem proves that any twisted Gabidulin code $\ev{\cV_{k,t,\eta}}{\alphaVec}$ with $1< t<s-1$ and $\eta \neq 0$ is not equivalent to one of these code classes. Since generalised Gabidulin codes are equivalent to Gabidulin codes (cf.~\cite{horlemann2015new}), such twisted Gabidulin codes are also not equivalent to one of them.

\begin{theorem}
Let $\alphaVec$ be an $\Fq$-basis for $\Fqs\leq \Fqm$. Let $1< t<s-1$ and $\eta \neq 0$. Then the code $\eva{\cV_{k,t,\eta}}$ is not equivalent to $\evb{\cV_{k,1,\eta'}}$ for any $\betaVec,\eta'$.
\end{theorem}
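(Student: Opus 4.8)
The plan is to use \cref{lem:equiv} to reduce the claimed inequivalence to a statement about the modular structure of the evaluation spaces, and then to extract a contradiction from the degree pattern of the twist. Suppose for contradiction that $\eva{\cV_{k,t,\eta}}$ is equivalent to $\evb{\cV_{k,1,\eta'}}$. Since $\alphaVec$ is an $\Fq$-basis for $\Fqs$, we have $\malpha = x^s - 1$ (up to scaling), so reduction mod $\malpha$ is just reduction of exponents mod $s$, which is convenient. By the second part of \cref{lem:equiv} there exist a monomial $\psi = \mu x^r$ and a skew polynomial $\phi$ with $\deg(\phi) < s$ and $\gcrd(\phi, x^s-1) = 1$ such that
\[
  \{ f \modr (x^s-1) : f \in \cV_{k,t,\eta} \} = \{ \psi g \phi \modr (x^s-1) : g \in \cV_{k,1,\eta'} \} \ .
\]
The first step is to translate this set equality into a statement about which exponents can occur. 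On the left, the nonzero exponents appearing across all polynomials lie in $\{0,1,\ldots,k-1\} \cup \{k-1+t\}$, with the crucial \emph{rigidity}: the coefficient of $x^{k-1+t}$ is forced to equal $\eta$ times the coefficient of $x^0$, and all other coefficients (of $x^0,\ldots,x^{k-1}$) are free. On the right, before reducing mod $x^s-1$, the polynomial $\psi g \phi$ is a product of a monomial, a polynomial with exponent support in $\{0,\ldots,k-1\}\cup\{k\}$ (again with the rigid relation between the $x^k$- and $x^0$-coefficients), and a fixed invertible-mod-$(x^s-1)$ polynomial $\phi$. The monomial $\psi$ just shifts all exponents by $r$ and rescales, so we may absorb it; the content is the structure of $\{ g\phi \modr (x^s-1) : g \in \cV_{k,1,\eta'} \}$.

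The key step is to pin down $\phi$. Because both sides are $k$-dimensional $\Fqm$-spaces and the left side contains the subspace $\langle x, x^2, \ldots, x^{k-1}\rangle_{\Fqm}$ coming from the "free middle" coefficients (those polynomials with $f_0 = 0$), the right side must contain a matching $(k-1)$-dimensional subspace, namely $\{\psi g \phi : g \in \langle x,\ldots,x^{k-1}\rangle_{\Fqm}\} \bmod (x^s-1)$. Comparing supports of these two $(k-1)$-dimensional spaces should force $\phi$ (mod $x^s-1$) to be itself essentially a monomial, or more precisely to have very restricted support. The cleanest route: look at the leftmost and rightmost exponents occurring. On the left, exponents range over an interval $\{0,\ldots,k-1\}$ together with the one isolated exponent $k-1+t$; since $1 < t < s-1$ this isolated exponent is $\geq k+1$ and $\leq k-2+s < s$, so it genuinely sits apart from the block $\{0,\ldots,k-1\}$ and does not wrap around mod $s$. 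The right side must reproduce exactly this "one interval plus one isolated point" pattern. Multiplying the $\cV_{k,1,\eta'}$-block (support $\{0,\ldots,k-1,k\}$, an interval of length $k+1$) by a skew polynomial $\phi$ and reducing mod $x^s-1$ will, unless $\phi$ is (equivalent to) a monomial, spread the support into more than the allowed shape — this is where I would do a careful but elementary support/leading-and-trailing-term analysis, using $\deg\phi < s$ and $n = s$, $k < s$ to control wraparound. Concluding $\phi = \nu x^{r'}$ for some $\nu \in \Fqm^\ast$, both twists become honest monomials and we get
\[
  \cV_{k,t,\eta} \bmod (x^s-1) \;=\; x^{r}\,\nu^{(\cdot)}\, \cV_{k,1,\eta'} \, x^{r'} \bmod (x^s-1)
\]
as sets, i.e. the two twisted spaces coincide up to a monomial multiplication on each side and reduction mod $x^s-1$.

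The final step is to derive a contradiction from this. Applying monomials on the left and right to $\cV_{k,1,\eta'}$ shifts the support block $\{0,\ldots,k-1\}$ and the twist exponent $k$ by a \emph{fixed} total amount and rescales coefficients by fixed field elements (and $\sigma$-powers thereof); in particular it maps the twist relation "coefficient at the top exponent $=\eta'\times$ coefficient at the bottom exponent of the block" to an analogous relation between the top exponent $k + (\text{shift})$ and the bottom exponent $0 + (\text{shift})$ of the image block — so the gap between the twisted exponent and the block-bottom exponent is preserved and equals $1$ on the $\cV_{k,1,\eta'}$ side. But on the $\cV_{k,t,\eta}$ side, reducing mod $x^s-1$, that same gap equals $t$ (since $1 < t < s-1$ guarantees no reduction collapses the picture: the exponent $k-1+t$ stays distinct from $0,\ldots,k-1$ mod $s$). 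Since $t \neq 1$, the supports cannot match, a contradiction. I expect the main obstacle to be the middle step — rigorously showing $\phi$ reduces to a monomial. The subtlety is handling the reduction mod $x^s-1$ carefully (exponents live in $\ZZ/s\ZZ$, so "interval" and "isolated point" must be interpreted cyclically) and ruling out degenerate multiplications where $\phi$ has two or more terms but cancellation mod $x^s-1$ still yields the right support; the hypotheses $k < s$, $1 < t < s-1$, and $\alphaVec$ a full $\Fq$-basis of $\Fqs$ are exactly what is needed to block these degeneracies, and the bookkeeping will need to invoke them explicitly.
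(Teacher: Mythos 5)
Your plan starts the same way as the paper's proof: invoke \cref{lem:equiv}, note that $\malpha = x^s-1$, and use the ``free middle'' monomials $x,\ldots,x^{k-1} \in \cV_{k,1,\eta'}$ to constrain $\phi$ via a support argument on $\psi x^i \phi \modr (x^s-1)$. The WLOG $\psi = 1$ reduction you sketch is also what the paper does (via $x^\ell \cV x^{s-\ell} = \cV$). But there is a genuine gap at the step you yourself flag as the main obstacle, and it is not merely a bookkeeping issue: the support analysis does \emph{not} force $\phi$ to be a monomial. Requiring $\{i+j \bmod s : i=1,\ldots,k-1\} \subseteq \{0,\ldots,k-1\}\cup\{k+t-1\}$ for every $j$ with $\phi_j \ne 0$ rules out all $j$ except $j=0$ and $j=s-1$ (since $0$ and $s-1$ are adjacent mod $s$, and shifting both by any $i\in\{1,\ldots,k-1\}$ lands inside the block $\{0,\ldots,k-1\}$). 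So the support argument only yields the binomial $\phi = \phi_0 + \phi_{s-1}x^{s-1}$, not a monomial. Consequently the concluding ``gap $=t$ vs.\ gap $=1$'' contradiction, which presupposes $\phi = \nu x^{r'}$, does not follow.

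The missing ingredient is to exploit the \emph{coefficient} constraint of the twist, not just the exponent support. The paper finishes by feeding in the twist-carrying element: $g=1$ if $\eta'=0$, else $g = 1+\eta'x^{k}$, expanding
\begin{align*}
(1+\eta'x^{k})(\phi_0 + \phi_{s-1}x^{s-1}) \equiv \phi_0 + \phi_{s-1}x^{s-1} + \eta'\sigma^k(\phi_0)x^{k} + \eta'\sigma^k(\phi_{s-1})x^{k-1} \pmod{x^s-1},
\end{align*}
and observing that membership in $\cV_{k,t,\eta}$ forces the $x^k$-coefficient to vanish (since $t>1$ means $k \ne k-1+t$), giving $\phi_0=0$, and the $x^{s-1}$-coefficient to vanish (since $k-1+t < s-1$), giving $\phi_{s-1}=0$; contradiction. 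Without some argument of this kind, your plan does not close. Your idea of matching $(k-1)$-dimensional subspaces is also more than what is needed; the paper just uses elementwise membership, which is simpler.
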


\begin{proof}
We have that $\malpha = x^s-1$. As $x^i\in \cV_{k,1,\eta'}$ for $i \in \{1,\ldots,k-1\}$, by Lemma~\ref{lem:equiv} we must have 
\begin{align*}
\psi x^i \phi \modr (x^s-1) \in \cV_{k,t,\eta}
\end{align*}
Let $\psi = x^\ell$ and $\phi = \sum_{j=0}^{s-1} \phi_j x^j$. Note that $x^\ell \cV_{k,t,\eta} x^{s-\ell} = \cV_{k,t,\eta}$, and  $x^\ell \cV_{k,1,\eta'} x^{s-\ell} = \cV_{k,1,\eta'}$, and so without loss of generality we may assume that $\ell=0$.
Hence we have
\begin{align*}
\sum_j \phi_j^{\sigma^{i}}x^{i+j \mod s} \in \cV_{k,t,\eta}
\end{align*}
for all $i \in \{1,\ldots,k-1\}$. As the coefficient of $x^r$ is zero for every element of $\cV_{k,t,\eta}$ for $r\notin \{0,\ldots,k-1\}\cup \{k+t-1\}$, we must have that $\phi_j=0$ whenever $i+j \notin \{0,\ldots,k-1\}\cup \{k+t-1\}$. Hence we get that $\phi = \phi_{0}+\phi_{s-1}x^{s-1}$. 

If $\eta'=0$, then as $1\in \cV_{k,1,\eta'}$ we get that $\phi_{0}+\phi_{s-1}x^{s-1}\in \cV_{k,t,\eta}$, which implies $\phi=0$, a contradiction. Suppose now that $\eta' \ne 0$. Then as $1+\eta' x^{k}\in \cV_{k,1,\eta'}$, we get that 
\begin{align*}
(1+\eta' x^{k})(\phi_{0}+\phi_{s-1}x^{s-1}) \modr (x^s-1)\in \cV_{k,t,\eta}.
\end{align*}
But this is equal to 
\begin{align*}
\phi_0+\phi_{s-1}x^{s-1}+\eta' \phi_0^{\sigma^k}x^k+\eta'\phi_{s-1}x^{k-1},
\end{align*}
and so as $t>1$ the coefficient of $x^k$ must be zeros, implying $\phi_0=0$, and as $s-1>k+t$ we must have $\phi_{s-1}=0$, implying $\phi=0$, a contradiction. \qed
\end{proof}

\section{Possible Application}
\label{sec:application}

The rank-metric variant GPT \cite{gabidulin1991ideals} of the McEliece public-key cryptosystem \cite{mceliece1978public} was a potential candidate to reduce the size of the public key until structural attacks on the system were found \cite{gibson1995severely,gibson1996security}. Since then, many variants of GPT have been proposed and broken, see~\cite{overbeck2008structural} for an overview.

Structural attacks on a variant of the McEliece cryptosystem based on the twisted Gabidulin codes in \cite{sheekey2015new} can be efficiently performed since any such code of dimension $k$ is a subcode of a Gabidulin code of dimension $k+1$. Such an immediate attack is not possible for the new twisted Gabidulin codes since such a code is only subcode of a $(k-1+t_\ell)$-dimensional Gabidulin code, which does not help the attacker if $t_\ell$ is sufficiently large.

Therefore, twisted Gabidulin codes should be thorougly analysed for their suitability in the McEliece cryptosystem.
An immediate necessity is the existence of an efficient decoding algorithm, but also possibilities for structural attacks should be investigated.

\section{Conclusion}
\label{sec:conclusion}

We introduced a new constructive class of rank-metric codes, twisted Gabidulin codes, that contain codes inequivalent to existing classes, such as Gabidulin or the twisted Gabidulin codes in \cite{sheekey2015new}.
It was proved that MRD twisted Gabidulin codes exist for $n$ up to $2^{-\ell}m$.
Similar to the results on twisted Reed--Solomon codes \cite{beelen2017twisted}, longer codes might be possible.

\bibliographystyle{splncs03}
\bibliography{main}

\end{document}